\title{A Practical Dynamic Programming Approach to Datalog Provenance Computation}
\author{Yann Ramusat}{DI ENS, ENS, CNRS, PSL University \& Inria, France}{yann.ramusat@ens.fr}{https://orcid.org/0000-0002-7181-8331}{}
\author{Silviu Maniu}{Université Paris-Saclay, LISN, CNRS, France}{silviu.maniu@lisn.upsaclay.fr}{https://orcid.org/0000-0002-8623-1533}{}
\author{Pierre Senellart}{DI ENS, ENS, CNRS, PSL University \& Inria \& IUF, France}{pierre@senellart.com}{https://orcid.org/0000-0002-7909-5369}{}
\authorrunning{Y.\, Ramusat, S.\, Maniu, and P.\, Senellart}
\keywords{Datalog, provenance, semiring, graph databases, dynamic programming}
\renewcommand{\leq}{\leqslant}
\renewcommand{\geq}{\geqslant}
\renewcommand{\epsilon}{\varepsilon}
\renewcommand{\phi}{\varphi}
\let\algo\textsc
\pgfplotsset{compat=1.15}
\pgfplotsset{
  y tick label style={/pgf/number format/.cd, set thousands separator={\,}}
}
\renewcommand{\inf}{\mathop{\mathsf{inf}}\limits}
\newcommand{\prov}{\mathop{\mathsf{prov}}}
\begin{document}

\maketitle

\begin{abstract}
We establish a translation between a formalism for dynamic programming
  over hypergraphs and the computation of semiring-based provenance for Datalog programs.
The benefit of this translation is a new method for computing provenance for a specific class of semirings. 
Theoretical and practical optimizations lead to an efficient implementation using \textsc{Soufflé}, a state-of-the-art Datalog interpreter. 
Experimental results on real-world data suggest this approach to be efficient in practical contexts, even competing with our previous dedicated solutions for computing provenance in annotated graph databases. 
The cost overhead compared to plain Datalog evaluation is fairly moderate
  in many cases of interest.
\end{abstract}

\section{Introduction}
\label{sec:introduction}

A notion of \textit{provenance} for Datalog queries was introduced by
Green, Karvounarakis, and Tannen~\cite{green_provenance_2007}. It is
based on the algebraic structure of \textit{semirings} to encode
additional meta-information about query results and extends the notion of 
semiring provenance of the positive fragment of the relational algebra, also
introduced in~\cite{green_provenance_2007}. The \textit{full provenance}
of a Datalog program (i.e., the provenance associated to each
\textit{derived tuple}) is expressed as a \textit{system of equations}
over an \emph{$\omega$-continuous} semiring.

Beyond this initial definition, some research has proposed other
representation frameworks to improve the computation of Datalog
provenance. Circuits~\cite{deutch_circuits_2014} have been proposed to
obtain a compact representation of the provenance. Derivation-tree
analysis~\cite{esparza_solving_2011} focuses on the natural mapping
between systems of equations over $\omega$-continuous semirings and
context-free grammars. These techniques rely on additional properties of
the semiring, thus being limited to some specific applications; in
particular, some semiring properties such as \emph{absorptivity} serve as
a basis for optimization in both scenarios.

The present work addresses two major challenges related to the
computation of semiring-based provenance of Datalog programs.

On the first hand, much of the knowledge around the theory and
applications of using semirings to capture some meta-information about a
computation (i.e., \emph{provenance} in database speak) is spread across
different fields of computer sciences. Among other areas, semirings have
been successfully used in linguistic structure
prediction~\cite{smith_linguistic_2011}, dynamic
programming~\cite{huang_advanced_2008}, constraint
programming~\cite{bistarelli_semiring-based_1997}, and variations of the
shortest-distance problem in graphs~\cite{Mohri:2002:SFA:639508.639512}.
The structure of semirings is also at the heart of the study of formal
languages~\cite{rozenberg_handbook_1997, Kuich:1985:SAL:576931}, weighted
automata~\cite{droste_handbook_2009}, graph theory and graph
algorithms~\cite{minoux:hal-01304880}, and computational
algebra~\cite{parallel}. The \textit{algebraic path
problem}~\cite{general} provides a common viewpoint for
a large variety of problems involving semirings.
Because of the widespread use of semirings in different areas, and rather
unfortunately, it is common to find 
different names used for similar properties (e.g., $0$-closed
semirings are also called absorptive, bounded, simple, $0$-stable,
etc.). Even more concerning, the term ``semiring'' itself is replaced
by similar names in some works (e.g. by ``dioids'' or ``algèbre de
chemin''~\cite{RO_1975__9_1_77_0}); in some cases, the semiring
structure could have been used to model the problem, but was not
acknowledged at the time~\cite{knuth_generalization_1977}. We believe it
is necessary to spend some effort establishing 
links between multiple representation frameworks of what is essentially
semiring provenance to
benefit from a wide range of techniques proposed in various domains of
computer science. This would
obviate the pitfall of designing algorithms and developing techniques in
the jargon of a given field, whereas some key ideas may have already been
investigated under similar algebraic frameworks, but by other research
communities.

On the other hand, to the best of the knowledge of the authors, the
bulk of the scientific literature about provenance for Datalog programs
is mostly concerned with, either, formalization and
extensions~\cite{DBLP:journals/corr/abs-2105-14435} to the theoretical
framework behind semiring-based provenance, with the objective to reach a
massive number of application domains, or elegantly leveraging
quite abstract tools such as circuits~\cite{deutch_circuits_2014} or
Newton convergence~\cite{esparza_solving_2011} to optimize computations.
Those works rarely go all the way to the implementation level,
apart from some 
``proofs of concept''~\cite{10.1007/978-3-319-08846-4_1}, raising the
question of the applicability to real-world data.

A work of note that does come with an implementation
is~\cite{7113407,Deutch2018EfficientPT}, concerned about the computation
of a \textit{relevant} subset of provenance information through selection
criteria based on tree patterns and the ranking of rules and facts, when
the computation of the \textit{full} provenance (how-provenance) is
unfeasible in practice.

Our contributions can be organized into four parts.
\begin{enumerate}
	\item We first establish a correspondence between dynamic programming over hypergraphs (as introduced in~\cite{huang_advanced_2008} under the name of \textit{weighted hypergraphs}) and the \textit{proof-theoretic} definition of the provenance for Datalog programs. 
		We provide both-way translations and characterize for
                which class of semirings the \textit{best-weight
                derivation} in the hypergraph corresponds to the
                provenance of the initial Datalog program.
	\item The translation we thus introduced permits us to obtain a
          version of Knuth's generalization of Dijkstra's
          algorithm~\cite{knuth_generalization_1977} to the \emph{grammar
          problem}, adapted to the case
          of Datalog provenance computation. 
		In the special setting where all hyperedges are of
                arity~1, we obtain the classical notion of semiring-based
                provenance for graph
                databases~\cite{ramusat:hal-01850510}. In the general
                setting, the algorithm steadily generalizes to Datalog the adapted Dijkstra's algorithm we presented in~\cite{ramusat:hal-01850510}, under the same assumptions on the properties of the underlying semiring.
	\item Such algorithm is unlikely to be efficient as-is in practical contexts. 
		The main issue is closely related to the inefficiency of basic Datalog evaluation: many computations of facts (provenance values) have already been deduced, leading to redundant computations. 
		We show that the \textit{semi-naïve} evaluation strategy
                for Datalog is also applicable in our setting. An added
                advantage is that it 
		greatly facilitates the extension of existing Datalog solvers to compute provenance annotations.
		We implement our strategy extending \textsc{Soufflé}~\cite{10.1007/978-3-319-41540-6_23}, a state-of-the-art Datalog solver. 
		We apply our solution to process rich graph queries
                (translated into Datalog programs) on several real-world
                and synthetic graph datasets, as well as to Datalog
                programs used in previous
                works~\cite{Deutch2018EfficientPT}.
		The performance of the implementation competes with
                dedicated solutions specific to graph databases~\cite{ramusat:hal-03140067}.
	\item The link we established with the framework of weighted
          hypergraphs, and hence, with the grammar problem, opens up new opportunities and challenges for semiring-based provenance for Datalog programs. 
		We thus carefully discuss selected topics and problems
                that can now understood as closely linked to the provenance framework~\cite{green_provenance_2007}.
\end{enumerate}

We start by introducing in
Section~\ref{sec:background} basic concepts on semirings and we recall the definition of provenance for Datalog programs.
We formulate and prove in Section~\ref{sec:correspondence} the correspondence between weighted hypergraphs and semiring-based provenance for Datalog programs.
In Section~\ref{sec:algorithm}, we present the adapted version of Knuth's
algorithm for the grammar problem and discuss theoretical aspects of its optimization. 
We then dive into the practical aspects of its implementation using \textsc{Soufflé}, and present in Section~\ref{sec:implementation} experimental results witnessing the efficiency of our approach for practical scenarios. 
We finally discuss, in Section~\ref{sec:opportunities} and
Section~\ref{sec:related-work}, respectively opportunities raised by our
translation and related work.
For space reasons, proofs and some auxiliary material are deported to an
appendix.

\section{Background}
\label{sec:background}

%\paragraph*{Semirings and basic properties}

In the following, we recall basic concepts of semiring theory
underlying the optimization techniques we provide in this paper. For more
background on the theory and applications of semirings, examples of
relevant semirings, as well as references to the literature on advanced
notions of semiring theory, see our previous
work~\cite{ramusat:hal-03140067}. We follow the definitions
in~\cite{ramusat:hal-03140067} and highlight notions that
occur under different names depending on the
application domain.

\begin{definition}[Semiring]
A \emph{semiring} is an algebraic structure
$(S,\oplus,\otimes,\bar{0},\bar{1})$ where $S$ is some
  set, $\oplus$ and $\otimes$ are binary operators over~$S$, and
  $\bar{0}$ and $\bar{1}$ are elements of $S$, satisfying the
  following axioms:
\begin{itemize}
  \item $(S,\oplus,\bar{0})$ is a \emph{commutative monoid}: $(a\oplus b)\oplus c =
    a\oplus(b\oplus c)$, $a\oplus b=b\oplus a$, $a\oplus
    \bar{0}=\bar{0}\oplus a=a$;
  \item $(S,\otimes,\bar{1})$ is a \emph{monoid}: $(a\otimes b)\otimes
    c=a\otimes(b\otimes c)$, $\bar{1}\otimes a=a\otimes \bar{1}=a$;
  \item $\otimes$ distributes over $\oplus$: $a\otimes(b\oplus c)=(a\otimes
  b)\oplus(a\otimes c)$; \item $\bar{0}$ is an annihilator for $\otimes$: $\bar{0}\otimes a=a\otimes \bar{0} = \bar{0}$.
\end{itemize}
\end{definition}

A semiring is \emph{commutative} if for all $a,b\in S$, $a
\otimes b=b \otimes a$.
A semiring is \emph{idempotent}
if for all $a\in S$, $a\oplus a=a$. For an idempotent semiring
we can introduce the \emph{natural order}
defined by $a \leq b$ iff $a \oplus b = a$.%
\footnote{There are
unfortunately two definitions of natural order commonly found
in the literature; we use here that found
in~\cite{Mohri:2002:SFA:639508.639512,huang_advanced_2008} which 
matches the standard order on the tropical semiring; other
works~\cite{rozenberg_handbook_1997,green_provenance_2007,ramusat:hal-03140067} define it as
the reverse order. Our choice obviously has some impacts: in particular, when
defining Datalog provenance, we need greatest fixpoints in lieu of the least
fixpoints used in~\cite{rozenberg_handbook_1997,green_provenance_2007}.}
Note that this order is compatible with the two binary operations of the
semiring: for all $a, b, c \in S$, $a \leq b$ implies $a \oplus c \leq b
\oplus c$ and $a \otimes c  \leq b \otimes c $. This is also called the
\emph{monotonicity} property.

An important property is that of
\emph{k-closedness}~\cite{Mohri:2002:SFA:639508.639512}, i.e., a semiring is $k$-closed if:
\(
  \forall a\in S, ~\bigoplus_{i=0}^{k+1} a^i =
  \bigoplus_{i=0}^k a^i.\)
Here, by $a^i$ we denote the repeated application of the $\otimes$ operation $i$
times, i.e., $a^i=\underbrace{a\otimes a \otimes \dots\otimes a}_i$.
0-closed semirings (i.e., those in which $\forall a\in S,
\bar{1}\oplus a=\bar{1}$)
have also been called \emph{absorptive}, \emph{bounded}, or \emph{simple}
depending on the literature. Note that any $0$-closed semiring is
idempotent (indeed, $a\oplus a=a\otimes(\bar{1}\oplus
\bar{1})=a\otimes \bar{1}=a$) and therefore admits a natural order.

Huang~\cite{huang_advanced_2008} introduces the notion of
\emph{superiority} of a semiring $S$ with respect to a partial
order~$\leq$, defined by:
$\forall a,b\in S\: a\leq a\otimes b, \: b\leq a\otimes b$.
The
natural order satisfies this notion for $0$-closed semirings:

\begin{lemmarep}\label{lem:superiority}
  Let $S$ be an idempotent semiring and $\leq$ the natural order over
  $S$. Then $S$ is \emph{superior} with respect to $\leq$
  if and only if $S$ is $0$-closed.
\end{lemmarep}

\begin{proof}
  First assume $S$ superior with respect to~$\leq$. Then for any $a$,
  $\bar 1\leq \bar 1\otimes a=a$,
  which means that $\bar 1+a = \bar 1$, i.e., $S$ is $0$-closed.

  Now assume $S$ $0$-closed.
Since $a \oplus a \! \otimes \! b = a \otimes (\bar{1} \oplus b) = a$, we
  have: $a \leq a \otimes b$, and similarly for $b\leq a\otimes b$. Thus
  $S$ is superior with respect to~$\leq$.
\end{proof}

An easier way of understanding natural order in 0-closed semirings 
is to note that for
any idempotent semiring $S$, $\bar{0}$ is the greatest element ($\forall
a \in S$, $a\oplus 0=a \leq \bar{0}$) while, if the semiring is also 0-closed (i.e.,
\emph{bounded}), $\bar{1}$ is the smallest ($\forall a \in S$, 
$\bar{1}\oplus a=\bar{1} \geq a$).
Thus a bounded semiring~$S$ verifies $\bar{1} \leq a \leq \bar{0}$ for all
$a\in S$.

%\paragraph*{Systems of equations over
%\texorpdfstring{$\boldsymbol{\omega}$}{ω}-continuous semirings}

\begin{definition}[$\omega$-Continuous semiring] 
  A semiring $(S,\oplus,\otimes,\bar 0,\bar 1)$ is \emph{$\omega$-continuous} if
\begin{enumerate}
  \item $(S, \geq)$ is a \emph{$\omega$-complete} partial order, i.e.,
    the infimum $\inf_{i \in
    \mathbb{N}} \: a_i$ of any infinite chain $a_0 \geq a_1 \geq \dots $
    exists in $S$.
  \item both addition and multiplication are \emph{$\omega$-continuous} in both
  arguments, i.e., for all $a\in S$ and infinite chain  $a_0 \geq a_1
    \geq \dots $, $(a\oplus\inf_{i\in\mathbb{N}} a_i)=\inf_{i\in\mathbb{N}}
    (a\oplus a_i)$, $(a\otimes \inf_{i\in\mathbb{N}} a_i)=\inf_{i\in\mathbb{N}}
    (a\otimes a_i)$, $(\inf_{i\in\mathbb{N}} a_i)\otimes a=\inf_{i\in\mathbb{N}}
    (a_i\otimes a)$.
\end{enumerate}  
\end{definition}

In such semirings we can define countable sums: $\bigoplus\limits_{n \in
\mathbb{N}} a_n = \underset{m \in \mathbb{N}}{\mathsf{inf}} \bigoplus\limits_{i = 0}^{m} a_i$.

A \emph{system of fixpoint equations} over an $\omega$-continuous
semiring $S$ is a finite set of equations:
\(
		X_1  =  f_1(X_1,X_2, \dots, X_n) ,\dots,
		X_n  =  f_n(X_1,X_2, \dots, X_n),
                \)
where $X_1,\dots,X_n$ are variables and $f_1,\dots f_n$ are polynomials with coefficients in~$S$.
We extend the notion of natural order from semiring elements to
tuples of semiring elements by simply considering the product order.
We then have the following on solutions of a system
of equations over an $\omega$-continuous semiring:

\begin{theorem}[Theorem 3.1 of~\cite{rozenberg_handbook_1997}]
  Every system of fixpoint equations $\mathbf{X} = f(\mathbf{X})$ over
  a commutative $\omega$-continuous semiring has a greatest solution
  $\mathsf{gfp}(f)$ w.r.t. $\leq$, and $\mathsf{gfp}(f)$ is equal to the
  infimum of the
Kleene sequence:
  \( \mathsf{gfp}(f) = \inf_{m \in \mathbb{N}} f^m (\mathbf{\bar 0}).\)
\end{theorem}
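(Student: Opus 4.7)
The plan is to adapt the standard Kleene fixpoint argument, being careful about the order convention used in this paper (where $\bar 0$ is the top element and the greatest fixpoint appears as an infimum of a descending chain starting at $\bar{\mathbf 0}$).

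First, I would establish that each component $f_i$, and hence the vector-valued map $f$, is monotone with respect to the natural order. Since each $f_i$ is a polynomial built from $\oplus$ and $\otimes$, this follows by a straightforward induction on polynomial structure, using the monotonicity property of $\oplus$ and $\otimes$ stated just after the introduction of the natural order. Extending the natural order to $S^n$ componentwise preserves monotonicity.

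Next I would show that the Kleene sequence $(f^m(\bar{\mathbf 0}))_{m\in\mathbb N}$ is a descending chain. Because $\bar{\mathbf 0}$ is the greatest element of $S^n$, we have $f(\bar{\mathbf 0}) \leq \bar{\mathbf 0}$, and applying $f$ repeatedly together with monotonicity yields $f^{m+1}(\bar{\mathbf 0}) \leq f^m(\bar{\mathbf 0})$ for all $m$. By the $\omega$-completeness of $(S^n,\geq)$, the infimum $\kappa := \inf_{m\in\mathbb N} f^m(\bar{\mathbf 0})$ exists. I would then show $\kappa$ is a fixpoint using the $\omega$-continuity of $f$, which itself follows by a polynomial induction from the $\omega$-continuity of $\oplus$ and $\otimes$ (commutativity allowing one to treat either argument uniformly):
\[
f(\kappa) \;=\; f\bigl(\inf_{m} f^m(\bar{\mathbf 0})\bigr) \;=\; \inf_{m} f^{m+1}(\bar{\mathbf 0}) \;=\; \inf_{m\geq 1} f^m(\bar{\mathbf 0}) \;=\; \kappa,
\]
where the last equality uses that $f^0(\bar{\mathbf 0}) = \bar{\mathbf 0}$ is the top of $S^n$ and so dropping it does not change the infimum.

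Finally, to show $\kappa$ is the greatest fixpoint: for any fixpoint $\mu = f(\mu)$, we have $\mu \leq \bar{\mathbf 0}$, and by monotonicity together with the identity $\mu = f^m(\mu)$ we obtain $\mu \leq f^m(\bar{\mathbf 0})$ for every $m$; since $\kappa$ is the greatest lower bound of this chain, $\mu \leq \kappa$. The main technical obstacle is really the lifting step: verifying that $\omega$-continuity of $\oplus$ and $\otimes$ in each argument suffices to make an arbitrary polynomial $f_i$ continuous on infinite descending chains, and that this property survives taking products over $S^n$. Commutativity is used in this lifting to avoid having to separately handle left- and right-continuity of $\otimes$.
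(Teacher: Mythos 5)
The paper does not prove this statement at all: it is imported verbatim (modulo the reversed order convention) as Theorem~3.1 of the cited Handbook of Formal Languages chapter, so there is no in-paper proof to compare against. Your argument is the standard Kleene fixpoint construction, correctly adapted to this paper's convention where $\bar 0$ is the top element and the fixpoint is reached as the infimum of the descending chain $f^m(\bar{\mathbf 0})$; the monotonicity step, the descending-chain step, the continuity step, and the maximality step are all in the right place. The one point you flag but do not discharge --- lifting $\omega$-continuity of $\oplus$ and $\otimes$ in each argument separately to continuity of a multivariate polynomial along a chain in $S^n$ --- is indeed the only real work, and it goes through by the usual diagonal argument combining separate continuity with monotonicity (for a descending chain one has $\inf_i \inf_j g(x_i,y_j) = \inf_k g(x_k,y_k)$). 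A minor correction: commutativity is not what saves you there, since the paper's definition of $\omega$-continuity already postulates continuity of $\otimes$ in both the left and the right argument; note also that the natural order $a \leq b \Leftrightarrow a \oplus b = a$ is only reflexive when the semiring is idempotent, an implicit hypothesis of the statement as given here rather than a defect of your proof.
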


%\paragraph*{Datalog provenance}

We now recall some basics about the Datalog query language and refer to
\cite{DBLP:books/aw/AbiteboulHV95} for more details. A \textit{Datalog
rule} is of the form $R(\vec{x}) \; \textsf{:-} \; R_1(\vec{x_1}), \:
\dots, \: R_n(\vec{x_n})$ with $R$'s representing relations of a given arity
and the $\vec{x}$'s
tuples of variables of corresponding arities. Variables occurring on the
left-hand side, the \textit{head} of the rule, are required to occur in
at least one of the atoms on the right-hand side, the \textit{body} of
the rule. A \textit{Datalog program} is a finite set of Datalog rules. We
call \textit{fact} a rule with an empty body and variables replaced by
constants. We divide relations into \emph{extensional} ones (which can
only occur as head of a fact, or in rule bodies) and \emph{intensional}
ones (which may occur as heads of a non-fact rule). The set of
extensional facts is called the \emph{extensional database}
($\mathsf{EDB}$). We distinguish one particular relation occurring in the
head of a rule, the \emph{output predicate} of the
Datalog program. We refer to \cite{DBLP:books/aw/AbiteboulHV95} for the
semantics of such a program and the notion of \emph{derivation tree}.

There are two ways of defining the \emph{provenance} of a Datalog program
$q$ with output predicate $G$ over an $\omega$-continuous semiring. We
can first base this definition on the proof-theoretic definition of standard Datalog:

\begin{definition}[Proof-theoretic definition for Datalog provenance \cite{green_provenance_2007}]
\label{def:ptdfdp}
Let $(S, \oplus, \otimes, \bar{0}, \bar{1})$ be a commutative
  $\omega$-continuous semiring and $q$ a Datalog program with output
  predicate $S$ and such that all extensional facts $R(t')$ are annotated with an
  element of $S$, denoted as $\prov_R^q(t')$. Then the \emph{provenance} of
  $G(t)$ for $q$, where $G(t)$ is in the output of $q$, is defined as:
  \[
    \mathsf{prov}_G^q (t) = \bigoplus\limits_{\tau \: \textup{yields} \:
    t} \left( \bigotimes\limits_{t' \in \: \textup{leaves}(\tau)}
    \mathsf{prov}_R^q(t') \right).\]
\end{definition}

The first sum ranges over all the derivation trees of the fact $t$ (see
Figure~\ref{fig:ex-dr-datalog} for examples of derivation trees), the
second sum ranges over all leaves of the tree (extensional facts). This
definition describes how the provenance propagates across the deduction
process given an initial assignment of provenance weights to the
extensional relations of $q$, $\mathsf{prov}_R^q$.

\begin{figure}
	%\centering
	\begin{tikzpicture}[sibling distance = 2.5cm, level distance = 1cm, scale=0.9,every node/.style={scale=0.9}]
		\node {\sffamily path(Paris, London) 3}
			child {node {\sffamily edge(Paris, London) 3} 
				child [white] {node {dummy}} edge from
                                parent node [right] {$r_1$} };
	\end{tikzpicture}
	\begin{tikzpicture}[sibling distance = 3.5cm, level distance = 1cm, scale=0.9,every node/.style={scale=0.9}]
		\node {\sffamily path(Paris, London) 1}
			child	{node {\sffamily path(Paris, Lille) 1}
					child { node {\sffamily
                                        edge(Paris, Lille) 1} edge from
                                        parent node [right] {$r_1$} }}
			child {node {\sffamily edge(Lille, London) 0}
                        edge from parent node [right, xshift = -1.15cm]
                        {$r_2$} };
	\end{tikzpicture}
	\parbox[b]{2cm}{ \scriptsize 
					\begin{alignat*}{4}
					\multispan3{\hfil
                                          \textsf{edge(Paris, London) :-} }										&& 3			\\
					\multispan3{\hfil
                                          \textsf{edge(Paris, Lille) :-} }										&& 1			\\
					\multispan3{\hfil
                                          \textsf{edge(Lille, London) :-} }	 									&& 0			\\
					\textsf{path(x, y)}
                                          \; \;	&	\textsf{:-}	&
                                          \; \textsf{edge(x, y)}
                                          && \quad r_1	\\
					\textsf{path(x, y)}				\; \; 	&	\textsf{:-}	& \; \textsf{path(x, z),}	&&			\\
															&				& \; \textsf{edge(z, y)}	&& 
\quad
r_2		
\end{alignat*}
					}
	\caption{Derivation trees along their weights for the fact $\textsf{path(Paris, London)}$ 
	using the transitive closure Datalog program over the tropical semiring
	with an $\mathsf{EDB}$ containing $3$ facts}
	\label{fig:ex-dr-datalog}
\end{figure}
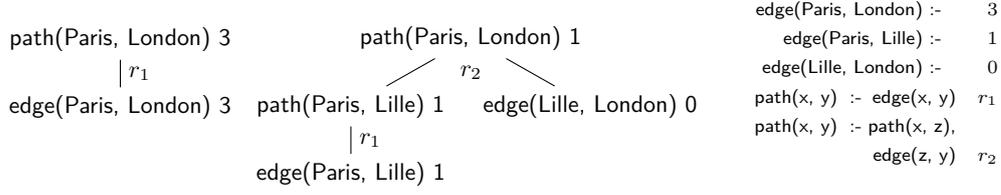

\begin{example}
  The \emph{tropical semiring} is $(\mathbb{R}^+\cup\{\infty\}, \min, +,
  \infty, 0)$. We show in Figure~\ref{fig:ex-dr-datalog} an example
  Datalog program (right) with tropical semiring annotations on
  extensional facts, as well as the (only) two derivation trees of the
  fact $\textsf{path(Paris, London)}$ along their weight. This witnesses
  that the provenance of $\textsf{path(Paris, London)}$ is $\min(1,3)=1$.
\end{example}

Since some
tuples can have infinitely many derivations, the Datalog semantics given
above cannot be used as an algorithm.
As pointed out in~\cite{green_provenance_2007} it is possible instead to
use a fixpoint-theoretic definition of the provenance of a Datalog
query~$q$: introduce a
fresh variable for every possible \emph{intensional} tuple (i.e., every
possible \emph{ground atom}), and produce
for this variable an equation that reflects the
\emph{immediate consequence operator} $T_q$ -- \emph{extensional} facts
appearing as their semiring annotations in these equations. This yields a
system of fixpoint equation $f_q$. The
provenance of $G(t)$ for $q$ is now simply the value of the variable
corresponding to $G(t)$ in $\mathsf{gfp}(f_q)$.

The fixpoint-theoretic definition directly yields an algorithm, albeit a very
inefficient one because of the need of generating a rule for every
intensional tuple. In this paper, we investigate more efficient
algorithms, for specific types of semirings.

\section{Datalog provenance and dynamic programming over hypergraphs}
\label{sec:correspondence}

We now show how to convert a Datalog program into a weighted hypergraph
(as introduced in \cite{huang_advanced_2008}) and characterize the
semirings where the best-weight derivation in the hypergraph corresponds
to the provenance for the initial Datalog program, mimicking the
proof-theoretic definition. We first
recall basic definitions and notation related to hypergraphs.

%\paragraph*{Basic definitions and notation}

\begin{definition}[Weighted hypergraph~\cite{huang_advanced_2008}]
  Given a semiring $S$, a \emph{weighted hypergraph} on~$S$ is a pair $H
  = \langle  V, E \rangle$, where $V$ is a finite set of vertices and
  $E$ is a finite set of hyperedges, where each element $e\in E$ is 
        a triple $e = \langle h(e), \mathrm{T}(e), f_e \rangle$ with
        $h(e) \in V$ its \emph{head} vertex, $\mathrm{T}(e) \in V$ an
        ordered list of \emph{tail} vertices and $f_e$ a \emph{weight
        function} from $S^{|\mathrm{T}(e)|}$ to $S$.
\end{definition}

We note $|e| = |\mathrm{T}(e)|$ the \textit{arity} of a hyperedge. If $|e| = 0$,
we say $e$ is nullary and then $f_e()$ is a constant, an element of the
semiring; we assume there exists at most one nullary edge for a given
vertex.
In that case, $v = h(e)$ is called a \emph{source vertex} and we note
$f_e()$ as $f_v$. The \textit{arity of a hypergraph} is the maximum
arity of any hyperedge.

The \textit{backward-star} $\mathrm{BS}(v)$ of a vertex $v$ is the set of incoming hyperedges  $ \{ e \in E \mid h(e) = v \} $.
The \emph{graph projection} of a hypergraph $H = \langle V, E \rangle$ is a directed graph $G=(V, E')$
where $E' = \{ (u,v) \mid \exists e \in \mathrm{BS}(v),  u \in \mathrm{T}(e) \}$.
A hypergraph $H$ is acyclic if its graph projection $G$ is acyclic; 
then a topological ordering of $H$ is an ordering of $V$ that is a
topological ordering of~$G$.

With these definitions in place, we can encode a Datalog program with
semiring annotations as a weighted hypergraph in a straightforward
manner:
\begin{definition}[Hypergraph representation of a Datalog program]
	Given a Datalog program $q$ as a set of rules $\{q_1, \cdots,
        q_n\}$ and the semiring $S$ used for annotations, we define the \emph{weighted hypergraph representation of $q$} as $H_q = \langle V_q,E_q \rangle$ with $V_q$ being 
	all ground atoms and, for each instantiation of a rule
        $t(\vec{x}) \leftarrow r_1(\vec{x_1}), \dots , r_n(\vec{x_n})$,
         a corresponding  edge 
$\langle t(\vec{x}) , ( r_1(\vec{x_1}), \dots , r_n(\vec{x_n}) ) , \otimes \rangle $. For a fact $R(\vec{x}) \in \mathsf{EDB}(q)$ we add a nullary edge $e$ with $h(e) = R(\vec{x})$ and $f_e = \mathsf{prov}_R^q(\vec{x})$.
\end{definition}

%The \textit{in-degree} of $v$ is $|\mathrm{BS}(v)|$. 
%The \textit{forward-star} $FS(v)$ of a vertex $v$ is the set of outgoing hyperedges $ \{ e \in E \mid v \in T(e)  \} $.
%The \textit{out-degree} of $v$ is $|FS(v)|$. 

%\paragraph*{Derivations}

The notion of \textit{derivations} is the hypergraph counterpart to
paths in graph. We recall the definition of derivations and we define it
in a way that is reminiscent of Datalog-related notions.

\begin{definition}[Derivation in hypergraph~\cite{huang_advanced_2008}]
We recursively define a \emph{derivation} $D$ of a vertex $v$ in a
  hypergraph $H$ (as a pair formed of a hyperedge and a list of
  derivations), its size $|D|$ (a natural integer) and its weight $w(D)$
  (a semiring element) as follows:
\begin{itemize}
\item If $e \in \mathrm{BS}(v)$ with $|e| = 0$, then $D = \langle e,
  \langle\rangle \rangle $ is a derivation of $v$, $|D| = 1$, and $w(D) = f_e()$. 
\item If $e \in \mathrm{BS}(v)$ with $|e| \geq 0$, $D_i$ is a derivation of
  $T_i(e)$ for $i=1\dots |e|$, then
$D =  \langle e, \langle D_1 \cdots D_{|e|} \rangle \rangle$ is a derivation of $v$, 
$|D| = 1 + \sum_{i = 1}^{|e|} |D_i|$, $w(D) = f_e(w(D_1), \dots ,w(D_{|e|}))$.
\end{itemize}

  We note $\mathcal{D}_{H}(v)$ the set of derivations of $v$ in $H$. 
\end{definition}

%\paragraph*{Best-weight}

When modeling Datalog provenance in a semiring~$S$ as weighted
hypergraphs on~$S$, all non-source weight functions are bound to the $\otimes$
operation of the semiring. Note that, if $S$ is idempotent, the natural
order on $S$ induces an ordering on derivations: $D \leq D'$ if $w(D)
\leq w(D')$.

%For established optimization techniques presented in~\cite{huang_advanced_2008} to work, the weight functions need to satify crucial properties: monotonicity and superiority. These properties are generalization of their semiring counterpart. We provide a brief recall of basic semiring theory presented in Section~\ref{preliminaries}. 1) Idempotence is a sufficient property for the natural order to exists. 2) If the natural order is well-defined, then it is monotonic. 3) Superiority for the natural order is equivalent to be $0$-closed (this property is also called absortivity in other situations). 

We now show that in this formalism, the Datalog provenance of an output predicate can be understood as the best-weight for the corresponding vertex in the hypergraph. 

\begin{definition}[Best-weight~\cite{huang_advanced_2008}]
	The \emph{best-weight} $\delta_H(v)$ of a vertex $v$ of a
        hypergraph $H$ on a semiring~$(S,\oplus,\otimes,\bar 0,\bar 1)$ is the weight of the best derivation of $v$:
        \[\delta_H(v) = \left\{
    \begin{array}{ll}
        f_v & \mbox{if v is a source vertex;} \\
        \bigoplus_{D \in \mathcal{D}_{H}(v)} w(D) & \mbox{otherwise.}
    \end{array}
  \right.\]
\end{definition}

The best-weight generally requires additional properties of either the
hypergraph or the semiring to be well-defined. Acyclicity for the
hypergraph is a sufficient condition. Existence of an infinitary
summation operator in the semiring extending $\oplus$, guaranteed in
$\omega$-continuous semirings, is also a
sufficient condition.
To guarantee semantics compatible with the
intuitive meaning of provenance, a more restrictive sufficient condition
is for the semiring to be a \textit{$c$-complete
star-semiring}~\cite{krob_monoides_1987}, see \cite{ramusat:hal-03140067}
for details.

We can now show that Datalog provenance can be computed through the
formalism of weighted hypergraphs.
Let us start with a lemma exhibiting a one-to-one mapping
between derivations in the hypergraph and proofs in Datalog. 
\begin{lemmarep} \label{lemma:mult}
  For any Datalog query~$q$ and grounding of an atom $v$ of $q$, there is
  a bijection between $\mathcal{D}_{H{q}}(v)$ and $\{\tau \mid \tau \mbox{ yields } v\}$.
\end{lemmarep}

\begin{proof} 
	By definition of $H_q$ each instantiation of a rule corresponds
        to a unique hyperedge. Then, we can inductively construct for a
        given derivation $D$ its associated (unique) Datalog proof tree $\tau_D$:
	\begin{itemize}
		\item If $|D| = 1$, then $v$ is a source vertex and thus an extensional tuple, we get the empty proof.
		\item If $|D| \geq 1$, then there exists $e \in \mathrm{BS}(v)$ where
                  $|e| \geq 0$ and $D_i$ a derivation of $T_i(e)$ for $1
                  \leq i \leq |e|$, where $D =  \langle e, D_1 \cdots D_{|e|} \rangle$. 
		By definition, this hyperedge corresponds to the grounding of a rule $t(\vec{x}) \leftarrow r_1(\vec{x_1}), \dots , r_n(\vec{x_n})$. 
		By induction, for $1 \leq i \leq |e|$, $\tau_{D_i}$ is the corresponding proof of the derivation $D_i$. Then by composition we obtain $\tau_D$ the proof for $D$.
	\end{itemize}
\end{proof}

We then show that the weight of each derivation of a tuple is equal to
the corresponding proof tree weight in Datalog.

\begin{lemmarep} \label{lemma:sum}
  For any Datalog query $q$ and grounding of an atom $v$ of $q$, for any
  derivation~$D$ of $v$ in $H_q$
	$ w(D) = \bigotimes\limits_{t' \in \: leaves(\tau_D)} \mathsf{prov}_R^q(t') $
        where $\tau_D$ is the proof tree corresponding to~$D$ in the
        bijection given by Lemma~\ref{lemma:mult}.
\end{lemmarep}

\begin{proof}
 By induction on the size of the derivation $D$: 
 \begin{itemize}
	 \item If $|D| = 1$ then, there exists a nullary edge $e \in E_q$ with $h(e) = v$ and $w(D) = f_v = \mathsf{prov}_R^q(r(\vec{x})) = \prod\limits_{t' \in \: leaves(\tau_D)} \mathsf{prov}_R^q(t')$.
	 \item If $|D| \geq 1$ then there exists $e \in E_q$ and $D$ is of the form $ \langle e, D_1 \cdots D_{|e|} \rangle$ with $D_i$ a derivation of $T_i(e)$ for $1 \leq i \leq |e|$. 
	 We have $w(D) = f_e(w(D_1), \dots ,w(D_{|e|}))$ and by definition of $f_e = \otimes$ and by IHP $w(D) = \bigotimes\limits_{t' \in \: leaves(\tau_D)} \mathsf{prov}_R^q(t')$.
 \end{itemize}
\end{proof}

Finally, we obtain:

\begin{theoremrep} \label{th:eq}
	Let $t$ be a tuple of a Datalog program $q$ with output predicate
        $G$ and $H_q$ its hypergraph representation, then
        $\mathsf{prov}_G^q (t) = \delta_{H_{q}}(G(t))$.
\end{theoremrep}

\begin{proof}
  \[\begin{array}{cclcl}
	\delta_{H_{q}}(t)	&	= & \bigoplus\limits_{D \in
        \mathcal{D}_{H_{q}}(G(t))} w(D) &  & \mbox{and by Lemma~\ref{lemma:sum},} \\
					&	= &\bigoplus\limits_{D
                                        \in \mathcal{D}_{H_{q}}(G(t))}  \left(\bigotimes\limits_{t' \in \: leaves(\tau_D)} \mathsf{prov}_R^q(t') \right)& & \mbox{and by Lemma~\ref{lemma:mult},} \\ 
					&	= &\bigoplus\limits_{\tau \: \mathit{yields} \: t} \left( \bigotimes\limits_{t' \in \: leaves(\tau)} \mathsf{prov}_R^q(t') \right) & = & \mathsf{prov}_T^q (t)
	\end{array}
  \]
\end{proof}

\section{Best-first method}
\label{sec:algorithm}
	
Knuth~\cite{knuth_generalization_1977} generalized the Dijkstra algorithm
to what he calls the \emph{grammar problem} (i.e., finding the \textit{best-weight derivation} from a given non-terminal, 
where each terminal has a specific weight and each rule comes with an associated weight function). %\pierre{Give a one-line description of what is the grammar problem for Knuth.} 
This has been identified as corresponding to the search problem in a monotonic superior hypergraph -- for each $e \in H$, $f_e$ is monotone and superior in each argument (see Table 3 in~\cite{huang_advanced_2008}). 
We showed in Lemma~\ref{lem:superiority} that
superiority corresponds to $0$-closedness in semirings with natural
orders.
The definition of the grammar problem assumes a total order on weights as the weights are real numbers. 
In the special case where all hyperedges are of arity~1 (and all weight
functions bound to~$\otimes$), we obtain the classical notion of semiring-based provenance for graph databases~\cite{ramusat:hal-01850510}. 
Thus, Knuth's algorithm can be seen as a generalization
to hypergraphs (and therefore, by the results of the previous section, to Datalog provenance computation) of the modified Dijkstra algorithm we presented in~\cite{ramusat:hal-01850510},
working on \emph{$0$-closed totally-ordered semirings}, which are
generalizations of the tropical semiring.

%, make a short discussion about an efficient implementation of this algorithm and finally give the complexity given a Datalog program $q$.

% TODO  after gen dijkstra: Summing-up the restrictions: our algorithm needs to be totally-ordered for the natural order and being absorptive. We will provide later a relaxation on the total-order restriction when the semiring is also $\otimes$-idempotent, generalizing our result over graph databases provenance [?].

\begin{toappendix}
We present as Algorithm~\ref{alg:knuth} the Best-first method, the
  reformulation of the Knuth's algorithm in terms of semiring-based
  Datalog provenance, the basis of the optimizations introduced in the
  main text.

\begin{algorithm}
  \small
		\caption{Naïve version of Best-first method for Datalog provenance}
                \label{alg:knuth}
		\begin{algorithmic}[1]
			\Require{Datalog query $q$, $\mathsf{EDB}$ $D$ with
                        provenance indications over a $0$-closed totally-ordered semiring $S$.}
			\Ensure{full Datalog provenance for the $\mathsf{IDB}$ of
                        $q$ over $D$.}
				\State $I \leftarrow \emptyset$ 
				\State Let $\nu$ be the function that
                                maps all facts of $D$ to their 
                                annotation in~$S$ and all potential facts of the
                                intensional schema of $q$ to $\bar{0}$
				\Repeat
					\For{each intensional fact $r(\vec{x}) \notin I$ }\label{while-knuth}
						\State 
						$\nu(r(\vec{x})) \quad
                                                \oplus \! \! =
                                                \bigotimes\limits_{1 \leq
                                                i \leq
                                                n}\nu(r_i(\vec{x_i}))$ for each instantiation of a rule 
						
						\quad
                                                \quad$r(\vec{x})~\leftarrow~r_1(\vec{x_1}),
                                                \dots , r_n(\vec{x_n})$
                                                with $r_i(\vec{x_i}) \in
                                                D \cup I$ 
					\EndFor
                                \State{\label{line:extract-minimal} Add to $I$ the tuple
                                $r(\vec{x_{\min}})$
                                such that $\nu(r(\vec{x_{\min}}))$ is $\leq$-minimal
                                among all potential intensional facts
                                $r(\vec{x})$ not in $I$}
                                \Until $\nu(r(\vec{x_{\min}})) = \bar{0}$
                                or $I$ contains all potential intensional
                                facts
                        \State \Return $\nu_{|I}$
		\end{algorithmic}
\end{algorithm}
\end{toappendix}
%\pierre{I did a rewrite of the algorithm to clarify some things, please
%check.}

\paragraph*{Optimized version of Best-first method} 

In the original paper of Knuth~\cite{knuth_generalization_1977}, 
the question of efficient construction of the set of candidate facts for the extraction of the minimal-valued fact is not dealt with. 
A lot of redundant work may be carried out if the implementation is not carefully designed.  

In the following, we show how to obtain a ready-to-be-implemented version incorporating ideas from the \textit{semi-naïve} evaluation of Datalog programs. 
Semi-naïve evaluation of Datalog, as described in~\cite[Chapter~13]{DBLP:books/aw/AbiteboulHV95} introduces a number of ideas aiming at improving the efficiency of the \textit{naïve} Datalog evaluation method; 
we show how to leverage these tricks in our setting.

%\paragraph*{Avoiding redundant derivations}

The \textit{naïve} evaluation of a Datalog program $q$ processes iteratively, applying at each step the \textit{consequence operator}~$\mathrm{T}_q$. 
%This corresponds exactly to the \textbf{for} iteration of Algorithm~\ref{alg:knuth} when the semiring is the Boolean semiring. 
Many redundant derivations are computed, leading to practical inefficiency. 
The \emph{semi-naïve} evaluation addresses this problem by considering only facts derived using at least one new fact found at the previous step. 
Note, however, whereas many new facts can be found at one step of the semi-naïve evaluation, only one is to be added by the Best-first method to respect the $\leq$-minimality ordering of added facts.

\begin{algorithm}
  \small
		\caption{Basic semi-naïve version of Best-first method for Datalog provenance}
                \label{alg:knuth2}
		\begin{algorithmic}[1]
			\Require{Datalog query $q$, $\mathsf{EDB}$ $D$ with
			                        provenance indications over a $0$-closed totally-ordered semiring $S$.}
			\Ensure{full Datalog provenance for the $\mathsf{IDB}$ of $q$.}
				\Function{Relax}{$r_0(\vec{x_0})$, $S$}
					\For{each instantiation of a rule $r(\vec{x})~\leftarrow~r_1(\vec{x_1}), \dots, r_m(\vec{x_m}), \cdots , r_n(\vec{x_n})$ where $r_i(\vec{x_i}) \in D \cup S \cup \{ r_0(\vec{x_0}) \}$, $1\leq i < m$, $r_m(\vec{x_m}) = r_0(\vec{x_0})$ and $r_i(\vec{x_i}) \in D \cup S$, $m < i \leq n$}
						\State $\nu(r(\vec{x})) \quad \oplus \! \! = \bigotimes\limits_{1 \leq i \leq n}r_i(\vec{x_i})$ 			
					\EndFor
				\EndFunction
				\State

				\State{$I \leftarrow \emptyset$}
				\State Let $\nu$ be the function that
				                                maps all facts of $D$ to their 
				                                annotation in~$S$ and all potential facts of the
				                                intensional schema of $q$ to $\bar{0}$
				\For{each intensional atom $r(\vec{x}) \,  \notin \! I$ }\label{while-knuth2}
					\State
					$\nu(r(\vec{x})) \quad \oplus \! \! = \bigotimes\limits_{1 \leq i \leq n}r_i(\vec{x_i})$ for each instantiation of a rule 
					
					\quad $r(\vec{x})~\leftarrow~r_1(\vec{x_1}), \dots , r_n(\vec{x_n})$ with $r_i(\vec{x_i}) \in D$ 
				\EndFor
				\While {$ \mbox{min}_{\nu \setminus I } \: r(\vec{x}) \neq \bar{0}$}
					\State{Add such minimal $r(\vec{x})$ to $I$} \label{extract}
					\State \Call{Relax}{$r(\vec{x})$, $I \setminus r(\vec{x})$}
				\EndWhile
			\State \Return{$\nu$}
		\end{algorithmic}
\end{algorithm}

This algorithm starts by initializing the priority queue with $\mathsf{IDB}$ facts that are derivable from $\mathsf{EDB}$ facts. 
Then, at each step, the minimum valued-fact is added, and only
derivations using this new fact are computed to update the value of the
facts in $I$. This algorithm stops whether: 1. the maximal value is reached for a candidate fact, 2. the list is empty - the minimal value of the list is by default the maximal value of the semiring.

\begin{theorem}
	Algorithm~\ref{alg:knuth2} computes the full Datalog provenance for $0$-closed totally-ordered semirings.
\end{theorem}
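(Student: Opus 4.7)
The plan is to reduce correctness of Algorithm~\ref{alg:knuth2} to that of Algorithm~\ref{alg:knuth} via a semi-naïve invariant. By Theorem~\ref{th:eq}, $\mathsf{prov}_G^q(t) = \delta_{H_q}(G(t))$, so it suffices to show the algorithm computes best-weights in $H_q$. Since $S$ is $0$-closed, Lemma~\ref{lem:superiority} makes it superior with respect to the natural order; combined with monotonicity and the total order, this places us exactly in Knuth's setting, so Algorithm~\ref{alg:knuth} is a correct restatement of his algorithm for the grammar problem.

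The crux is to prove that Algorithm~\ref{alg:knuth2} maintains throughout its main \textsc{while} loop the invariant that, for every intensional fact $r(\vec{x})$,
\[
\nu(r(\vec{x})) \;=\; \bigoplus_{\substack{\tau \text{ yields } r(\vec{x}) \\ \text{leaves}(\tau) \subseteq D \cup I}} \;\bigotimes_{t' \in \text{leaves}(\tau)} \mathsf{prov}_R^q(t').
\]
The initialization \textsc{for}-loop establishes the invariant when $I = \emptyset$, since the only derivations whose leaves are entirely in $D$ are the single-step rule instantiations computed there. For the inductive step, upon adding $r_0(\vec{x}_0)$ to $I$, \textsc{Relax} must contribute exactly the newly available derivations, i.e.\ those whose leaves lie in $D \cup I$ but use $r_0(\vec{x}_0)$ at least once. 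The ``pin at position $m$'' convention enumerates every such rule grounding (possibly more than once when $r_0(\vec{x}_0)$ occurs at several body positions), and idempotency of $\oplus$ guarantees that the $\oplus$-accumulating updates still yield the correct sum despite this redundancy.

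Once the invariant holds, the extraction step is justified exactly as in Knuth's proof: any derivation of $r(\vec{x}_{\min})$ not already counted in $\nu(r(\vec{x}_{\min}))$ must use some intensional fact $r'(\vec{y}) \notin I$, and by $\leq$-minimality of $r(\vec{x}_{\min})$ we have $\nu(r'(\vec{y})) \geq \nu(r(\vec{x}_{\min}))$; superiority and monotonicity then propagate this lower bound to the full derivation weight. Hence $\nu(r(\vec{x}_{\min}))$ already equals $\delta_{H_q}(r(\vec{x}_{\min}))$ at the moment of extraction and is stable under subsequent \textsc{Relax} calls, while the termination condition $\nu(r(\vec{x}_{\min})) = \bar{0}$ correctly rules out intensional tuples with no derivation. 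The main obstacle is the combinatorial verification that the position-$m$ trick does enumerate every new grounding via at least one valid value of $m$; this reduces to a short case analysis on where $r_0(\vec{x}_0)$ may sit among the body atoms of each rule, with idempotency absorbing any residual over-counting.
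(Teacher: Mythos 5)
Your overall strategy matches the paper's: reduce to best-weights in $H_q$ via Theorem~\ref{th:eq}, obtain superiority from $0$-closedness via Lemma~\ref{lem:superiority}, and then run a Dijkstra/Knuth-style argument that tuples enter $I$ in increasing order of their (optimal) values, with \textsc{Relax} only proposing values no better than that of the tuple just finalized. The paper phrases the extraction step as a contradiction on a minimal mislabeled tuple; you phrase it directly, but the content is the same.

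Two steps do not hold as written. First, your central invariant is vacuous: the leaves of a Datalog derivation tree are always \emph{extensional} facts, so the side condition $\mathrm{leaves}(\tau) \subseteq D \cup I$ excludes nothing, and your displayed identity would assert $\nu(r(\vec{x})) = \mathsf{prov}_G^q(r(\vec{x}))$ at every stage of the loop, which is false before convergence. The restriction you need is on the intensional atoms occurring as \emph{internal nodes} of $\tau$, and even then the clean statement is two-part: tuples in $I$ carry their true provenance, while tuples outside $I$ carry the $\oplus$ over single rule instantiations whose bodies lie in $D \cup I$ of the $\otimes$ of the already-finalized body values. Second, in the extraction step you claim that $\nu(r'(\vec{y})) \geq \nu(r(\vec{x}_{\min}))$ for some $r'(\vec{y}) \notin I$ used by an uncounted derivation, and that this ``propagates to the full derivation weight''; but the sub-derivation of $r'(\vec{y})$ sitting inside that tree need not have weight $\geq \nu(r'(\vec{y}))$, since $\nu(r'(\vec{y}))$ only bounds the derivations the algorithm has already relaxed. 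You must either pick $r'(\vec{y})$ lowest in the tree or argue, as the paper does, by contradiction on a globally minimal mislabeled tuple. Both problems are repairable with standard fixes, but as stated the inductive step does not go through. (A minor point: with the pin-at-position-$m$ convention, each new grounding is enumerated exactly once, with $m$ the last occurrence of $r_0(\vec{x}_0)$, so idempotency is not actually needed to absorb over-counting there.)
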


\begin{proof}
	We show the algorithm verifies the following invariant: whenever a tuple is added to~$I$ in Line~\ref{extract}, it has optimal value. This implies that $I$ is populated in increasing order: each new derivation computed in the $\Call{Relax}$ procedure only updates the priority queues with values greater than the value of the tuple relaxed (by superiority of $\otimes$).
	
	Suppose by contradiction that some output tuples are not correctly labeled and take such a minimal tuple $\nu = r(\vec{x})$. At the moment where $\nu$ is extracted with value $n$ let us consider an optimal derivation path of $\nu$ that leads to the optimum value $opt < n$. By superiority each tuple occurring in the tail of the rule has value less than $opt$. Thus a tuple occurring in the tail is either wrong-valued or not present in $I$ at the moment where $\nu$ is found. In both cases and because tuples are added to $I$ in increasing order we obtain a new minimal tuple incorrectly labeled by the algorithm, contradicting the hypothesis.
\end{proof}

%\textcolor{red}{ TODO: discuter de la complexite par rapport par exemple a la complexite de l'evaluation de Datalog sans provenance}

%\paragraph*{Precedence graph}

The structure of the Datalog program can be analysed to provide clues
about the predicates to focus on.
Following~\cite{DBLP:books/aw/AbiteboulHV95}, we introduce the notion of
\textit{precedence graph} $G_P$ of a Datalog program $P$. The nodes are
the $\mathsf{IDB}$ predicates and the edges are pairs of $\mathsf{IDB}$ predicates $(R,R')$ where $R'$ occurs at the head of a rule of $P$ with $R$ belonging to the tail.
$P$ is a \textit{recursive} program if $G_P$ has a directed cycle. 
Two predicates $R$ and $R'$ are mutually recursive if $R = R'$ or $R$ and
$R'$ participate in the same cycle of $G_P$. This defines equivalence
classes.
Putting it together, we obtain as Algorithm~\ref{alg:knuth3} out final
algorithm.

\begin{algorithm}[t]
		\caption{Semi-naïve version of Best-first method for Datalog provenance}
                \label{alg:knuth3}
		\begin{algorithmic}[1]
			\Require{$q$ a Datalog query with provenance indication over a $0$-closed totally-ordered semiring $S$.}
			\Ensure{full Datalog provenance for the $\mathsf{IDB}$ of $q$.}
				\State Compute the equivalence classes of $q$
				\For {each equivalence class in a topological order}
					\State Apply Algorithm~\ref{alg:knuth2} over $\mathsf{IDB}$ predicates in the equivalence class 
					\State considering previous equivalence classes as $\mathsf{EDB}$ predicates
				\EndFor
				\State \Return{$\nu$}
		\end{algorithmic}
\end{algorithm}

%\begin{toappendix}
%\emph{Note.} This idea can also be used in the graph context when
%using the Dijkstra algorithm, instead of maintaining one priority queue, we maintain $k$ queues where $k$ is the number of \emph{strongly connected components}. Then, we process vertices following topological order first. For graphs with a small maximum strongly connected component size, $g_n < n$, we get a parametrized algorithm of complexity $\mathcal{O}(n \: \mathsf{log} \: g_n)$.
%\end{toappendix}

\paragraph*{Generalization to distributive lattices}

In~\cite{ramusat:hal-03140067}, we outlined a new algorithm, based on the Dijkstra algorithm and solving the single-source provenance in graph databases with provenance indications over $0$-closed multiplicatively idempotent semirings (equivalents to distributive lattices).
This new method stems from a tentative to bridge the strong complexity gap for computing the provenance in the case of a semiring not $0$-closed and totally ordered.
A similar gap also appears when we consider provenance over Datalog queries (see Section~\ref{sec:related-work}). Thus, we show how to apply this method for computing provenance for Datalog queries over distributive lattices.

We provide a brief review of the key ideas presented
in~\cite{ramusat:hal-03140067}. Any element of a distributive lattice is
decomposable into a product of \textit{join-irreducible} elements of the
lattice, and there exists an embedding of the distributive lattice into a
chain decomposition of its join-irreducible elements. This ensures we can
1) work on a totally ordered environment and apply algorithms that
require total ordering over the elements, 2) independently compute
partial provenance annotations for each dimension to form the final provenance annotation.
Given $m$ the number of dimensions in the decomposition, our solution
(described in Algorithm~\ref{alg:ext-knuth})
performs $m$ launches of the Best-first method and thus, roughly has a cost increased by a factor $m$.

\begin{figure}
\begin{algorithm}[H]
  \caption{Generalized Best-first method for Datalog provenance}
  \small
                \label{alg:ext-knuth}
		\begin{algorithmic}[1]
			\Require{$q$ a Datalog query with provenance indication over a $0$-closed multiplicatively idempotent semiring $S$.}
			\Ensure{full Datalog provenance for the $\mathsf{IDB}$ of $q$.}
			\For{each $\mathsf{EDB}$ fact $R(\vec{x})$}
				\State{\textsc{Decompose}($R(\vec{x})$)}
			\EndFor
			\For{each dimension $i$ }
				\State{$\nu_i \leftarrow \textsc{Best-first}(q,i)$}
			\EndFor
			\State \Return{\textsc{Recompose}($\nu_1, \ldots, \nu_n)$}
		\end{algorithmic}
\end{algorithm}

\begin{algorithm}[H]
  \caption{Input Datalog program computing the transitive closure
  (\textsc{Soufflé} syntax)}
\small
                \label{alg:tr-dl}
		\begin{algorithmic}[1]
			\State{\textbf{.decl} edge(s:number, t:number[, @prov:semiring value])}
			\State{\textbf{.input} edge}

			\State{\textbf{.decl} path(s:number, t:number[, @prov:semiring value])}
			\State{\textbf{.output} path}
			
			\State{path(x, y) :- edge(x, y).}
			\State{path(x, y) :- path(x, z), edge(z, y).}
		\end{algorithmic}
\end{algorithm}

\begin{algorithm}[H]
  \caption{Corresponding \textsc{Soufflé} \textsf{RAM} program for Algorithm~\ref{alg:tr-dl}}
\small
                \label{alg:ram}
		\begin{algorithmic}[1]
			\If{$\neg ($edge $= \emptyset)$}
				\For{t0 \textbf{in} edge}
					\State{\textbf{add} (t0.0, t0.1) \textbf{in} path}
					\State{\textbf{add} (t0.0, t0.1) \textbf{in} $\delta$path}
				\EndFor
			\EndIf
			\Loop
				\If{$\neg (\delta$path = $\emptyset) \wedge \neg ($edge $= \emptyset)$}
					\For{t0 \textbf{in} $\delta$path}
						\For{t1 \textbf{in} edge \textbf{on index} t1.0 = t0.1}
							\If{$\neg ($t0.0, t0.1) $\in$ path}
								\State{\textbf{add} (t0.0, t0.1) \textbf{in} path'}
							\EndIf
						\EndFor
					\EndFor
				\EndIf
				\If{path' = $\emptyset$}
					\State{\textbf{exit}}
				\EndIf
				\For{t0 \textbf{in} path'}
					\State{\textbf{add} (t0.0, t0.1) \textbf{in} path}
				\EndFor
				\State{\textbf{swap} $\delta$path \textbf{with} path'}
				\State{\textbf{clear} path}
			\EndLoop
		\end{algorithmic}
\end{algorithm}

\begin{algorithm}[H]
  \caption{Modification of \textsf{RAM} program of
  Algorithm~\ref{alg:ram} to implement Best-first strategy}
\small
                \label{alg:ram-prov}
		\begin{algorithmic}[1]
			\If{$\neg ($edge $= \emptyset)$}
				\For{t0 \textbf{in} edge}
					\State{\textbf{update} (t0.0, t0.1, t0.prov) \textbf{in} path}
					
				\EndFor
				\For{t0 \textbf{in} path}
					\State{\textbf{add} (t0.0, t0.1, t0.prov) \textbf{in} $\delta$path}
				\EndFor
			\EndIf
			\Loop
				\If{$\neg (\delta$path = $\emptyset) \wedge \neg ($edge $= \emptyset)$}
					\For{t0 \textbf{in} $\delta$path}
						\For{t1 \textbf{in} edge \textbf{on index} t1.0 = t0.1}
							\If{$\neg ($t0.0, t1.1, $\bot$) $\in$ path}
								\State{\textbf{update} (t0.0, t0.1, t0.prov $\otimes$ t1.prov) \textbf{in} \textsf{pq}}
							\EndIf
						\EndFor
					\EndFor
				\EndIf
				\State{\textbf{clear} $\delta$path}
				\If{\textsf{pq} is empty}
					\State{\textbf{exit}}
				\EndIf
				\State{\textbf{add} \textsf{pq}.top() \textbf{in} \textsf{pq}.top().relation \textbf{and in} \textsf{pq}.top().$\delta$relation}
			\EndLoop
		\end{algorithmic}
\end{algorithm}
\end{figure}

\section{Implementation and experiments}
\label{sec:implementation}

In numerous application domains, Datalog is used as a \emph{domain specific language} (DSL) to express logical specifications for static program analysis. 
A formal specification, written as a \textit{declarative} Datalog program is usually translated into an efficient \textit{imperative} implementation by a \textit{synthesizer}. 
This process simplifies the development of program analysis compared to hand-crafted solutions (highly optimized \textsc{C++} applications specialized in enforcing a fixed set of specifications).
\textsc{Soufflé}~\cite{10.1007/978-3-319-41540-6_23,10.1145/2892208.2892226} has been introduced to provide efficient synthesis of Datalog specifications to executable \textsc{C++} programs, competing with state-of-the-art handcrafted code for program analysis. 
The inner workings of \textsc{Soufflé} were of interest to our work; the algorithm implementations are similar to the evaluation strategy followed by the Best-first method we introduced here. We present a brief overview of the architecture of \textsc{Soufflé} and discuss how we extended it. 

\paragraph*{Architecture and implementation}

Following what is described in~\cite{10.1007/978-3-319-41540-6_23}, an input datalog
program~$q$ goes through a staged specialization hierarchy. 
After parsing, the first stage of \textsc{Soufflé} specializes the
semi-naïve evaluation strategy applied to $q$, yielding a relational algebra machine program (\textsf{RAM}). 
Such a program consists in basic relational algebra operations enriched with I/O operators and fix-point computations. 
As a final step, the \textsf{RAM} program is finally either interpreted or compiled into an executable. For this work, we have only used the interpreter.
Our code was inserted in two different stages of \textsc{Soufflé}:
a new \textit{translation strategy} from the parsed program
          to the \textsf{RAM} program,
a \textit{priority queue}, replacing the code in charge of adding at run-time the tuples to the relations.

We showcase the result of our translation strategy in
Algorithms~\ref{alg:tr-dl}, \ref{alg:ram}, and \ref{alg:ram-prov} for a
Datalog query computing the transitive closure of a graph; this program
is given in Algorithm~\ref{alg:tr-dl} in its \textsc{Soufflé} syntax.
Algorithm~\ref{alg:ram} presents the corresponding \textsc{Soufflé}
\textsf{RAM} program resulting from applying the semi-naïve evaluation
strategy and Algorithm~\ref{alg:ram-prov} our modification to the
\textsf{RAM} program to provide provenance annotation via the Best-first
strategy and use the priority queue \textsf{pq} for provenance
computation. The $\bot$ notation corresponds to a wildcard.
Importantly, modifying directly at the \textsf{RAM} level of \textsc{Soufflé} allows us to benefit of all implemented optimizations.

\begin{figure}[p]
	\centering
	\begin{tikzpicture}[scale=0.7]
	\begin{axis}[
	ybar, axis on top,
	height=8cm, width=12.5cm,
	bar width=0.4cm,
	ymajorgrids, tick align=inside,
	major grid style={draw=white},
	ymax=200000.0,
	x=2.6cm,
	ybar=1pt,
	axis x line*=bottom,
	axis y line*=right,       				
	y axis line style={opacity=0},
	ymode=log,
	log basis y={10},
	log origin=infty,
	tickwidth=4pt,
	enlarge x limits=0.15,
	legend style={
		at={(0.5,-0.1)},
		anchor=north,
		legend columns=-1,
		/tikz/every even column/.append style={column sep=0.5cm}
	},
	ylabel={time (s)},
	ylabel style={at={(-0.04,0.5)}},
	symbolic x coords={
		rome99,powergrid-rand,yeast-rand,stif-rand},
              xticklabels={
                {\textsc{Rome99}},{\textsc{USPowerGrid}},{\textsc{Yeast}},{\textsc{Stif}}},
	xtick=data,
	point meta=rawy,
	log ticks with fixed point,
	]
%	\addplot [draw=none, fill=yellow!55] coordinates {
%		(rome99, .00232036400000000000)
%		%(rome99-rand, .00236240000000000000)
%		%(rome99-same, .00334286440000000000) 
%		(powergrid-rand,.00187317200000000000)
%		(yeast-rand,.00264467300000000000)
%		(facebook-rand, 0.001) 
%		(stif-rand,.02814748000000000000) };
	\addplot [draw=none, fill=blue!30] coordinates {
		(rome99, 14.2149)
		%(rome99-rand, 0.001)
		%(rome99-same, 0.001) 
		(powergrid-rand, 0.0789086)
		(yeast-rand, 0.576678)
		%(facebook-rand, 9.45231) 
		(stif-rand, 491.538) };

	\addplot [draw=none, fill=green!45] coordinates {
		(rome99, 68.27066000000000000000)
      	%(rome99-rand, 67.74498000000000000000)
      	%(rome99-same, 68.19559000000000000000) 
      	(powergrid-rand,415.51890000000000000000) 
      	(yeast-rand,1397.18300000000000000000)
      	%(facebook-rand,9336.27600000000000000000) 
      	(stif-rand,172800) };
	\addplot [draw=none, fill=brown!55] coordinates {
		(rome99, 30.62224000000000000000)
		%(rome99-rand, 30.48923000000000000000) 
		%(rome99-same, 30.71453000000000000000) 
		(powergrid-rand,11.86632000000000000000) 
		(yeast-rand,19.68208000000000000000)
		%(facebook-rand,1190.14900000000000000000) 
		(stif-rand,1006.81200000000000000000) };
	\addplot [draw=none,fill=red!40] coordinates {
		(rome99, 52.1789)
		%(rome99-rand, 0.001) 
		%(rome99-same, 0.001) 
		(powergrid-rand, 0.20201) 
		(yeast-rand, 1.8845)
		%(facebook-rand, 53.423) 
		(stif-rand, 2081.38) };
	
	\legend{%\algo{BFS},
			\algo{Soufflé},\algo{NodeElimination}-Id, \algo{NodeElimination}-Degree, \algo{Soufflé-prov} }
	\end{axis}
	\end{tikzpicture}
	\caption{Comparison between algorithms for all-pairs shortest-distances (Tropical). Values
        greater than 100\,000~s are timeouts.}
	\label{fig:comp}

        \vspace{-.5cm}
	\begin{tikzpicture}[scale=0.7]
	\begin{axis}[
	ybar, axis on top,
	height=8cm, width=12.5cm,
	bar width=0.4cm,
	ymajorgrids, tick align=inside,
	major grid style={draw=white},
	ymax=2.0,
	x=2.6cm,
	ybar=1pt,
	axis x line*=bottom,
	axis y line*=right,       				
	y axis line style={opacity=0},
	ymode=log,
	log basis y={10},
	log origin=infty,
	tickwidth=4pt,
	enlarge x limits=0.17,
	legend style={
		at={(0.5,-0.1)},
		anchor=north,
		legend columns=-1,
		/tikz/every even column/.append style={column sep=0.5cm}
	},
	ylabel={time (s)},
	ylabel style={at={(-0.04,0.5)}},
	symbolic x coords={
		rome99,powergrid-rand,yeast-rand,stif-rand},
              xticklabels={
                {\textsc{Rome99}},{\textsc{USPowerGrid}},{\textsc{Yeast}},{\textsc{Stif}}},
	xtick=data,
	point meta=rawy,
	log ticks with fixed point,
	]
	\addplot [draw=none, fill=yellow!55] coordinates { 
		(rome99, .00232036400000000000)
		%(rome99-rand, .00236240000000000000)
		%(rome99-same, .00334286440000000000) 
		(powergrid-rand,.00187317200000000000)
		(yeast-rand,.00264467300000000000)
		%(facebook-rand, 0.001) 
		(stif-rand,.02814748000000000000) 
	};
	\addplot [draw=none, fill=blue!30] coordinates { 
		(rome99, 0.019)
%		%(rome99-rand, 0.001)
%		%(rome99-same, 0.001) 
		(powergrid-rand, 0.0105)
		(yeast-rand, 0.012)
%		(facebook-rand, 9.45231) 
		(stif-rand, 0.066) 
	};
	\addplot [draw=none, fill=green!45] coordinates { 
		(rome99, 0.00345099600000000000)
		%(rome99-rand, 0.00484174000000000000)
		%(rome99-same, 0.00570795800000000000) 
		(powergrid-rand,.00750463400000000000)
		(yeast-rand,.00389555900000000000)
		(stif-rand,.02285673000000000000)
    };
	\addplot [draw=none, fill=brown!55] coordinates { 
		(rome99, .05655670000000000000)
		%(rome99-rand, .01891442000000000000) 
		%(rome99-same, .00715696800000000000) 
		(powergrid-rand,.19224890000000000000)
		(yeast-rand,.35900520000000000000)
		(stif-rand,.72772100000000000000)
	};
	\addplot [draw=none,fill=red!40] coordinates { 
		(rome99, 0.031)
%		%(rome99-rand, 0.001) 
%		%(rome99-same, 0.001) 
		(powergrid-rand, 0.0128) 
		(yeast-rand, 0.016)
%		(facebook-rand, 53.423) 
		(stif-rand, 0.17) 
	};
	
	\legend{\algo{BFS},
			\algo{Soufflé}, \algo{Dijkstra}, \algo{Mohri}, \algo{Soufflé-prov} }
	\end{axis}
	\end{tikzpicture}
	\caption{Comparison between algorithms for single-source shortest-distances (Tropical).}
	\label{fig:comp-ss}
	\begin{tikzpicture}[scale=0.7]
	  \begin{axis}[
	  			grid = major, 
	  			grid style = {dashed, gray!30}, 
	  			xlabel = {Output DB size},
	  			x label style={at={(current axis.right of origin)},anchor=north, below=5.5mm},
				ylabel = {time (s)},
				ylabel style={at={(-0.06,0.5)}},
				axis y line*=right,
				xmin = 0,   % start the diagram at this x-coordinate
				xmax = 3100,  % end   the diagram at this x-coordinate
				ymin = 0,   % start the diagram at this y-coordinate
				ymax = 4, % end   the diagram at this y-coordinate
				/pgfplots/xtick = {0,500,...,3500},
				/pgfplots/ytick = {0,1,...,4},
				xticklabels={0.0~M, 0.5~M, 1.0~M, 1.5~M, 2.0~M, 2.5~M, 3.0~M},
				width=12cm, height=5cm,
				legend style={
					at={(0,1)},
					anchor=north west},
                                x label style={at={(axis description cs:0.5,0.05)},anchor=north},
	  			]
                \addplot[mark=*,red] table {souffle-prov-iris.dat};
                \addplot[mark=square*,blue] table {souffle-iris.dat};
   		\legend{{\algo{Soufflé-prov}}, {\algo{Soufflé}}}
	  \end{axis}
	\end{tikzpicture}
	\parbox[b]{2cm}{ \scriptsize 
					\begin{alignat*}{3}
					\textsf{ra(v, w, x, y, z)}	\; \;	&	\textsf{:-}		&& \; \textsf{p(v), p(w), p(x), p(y), p(z).}	\\
					\textsf{rb(v, w, x, y, z)}	\; \;	&	\textsf{:-}		&& \; \textsf{p(v), p(w), p(x), p(y), p(z).}	\\
					\textsf{q(v)}				\; \;	&	\textsf{:-}		&& \; \textsf{r(v, w, x, y, z).}				\\
					\textsf{r(v, w, x, y, z)}	\; \; 	&	\textsf{:-}		&& \; \textsf{ra(v, w, x, y, z),}				\\
														&					&& \; \textsf{rb(v, w, x, y, z).}				\\
					\textsf{q(z)}				\; \;	&	\textsf{:-}		&& \; \textsf{r(v, w, x, y, z).}
					\end{alignat*}
					}
	\caption{Comparison between \algo{Soufflé} and \textsc{Soufflé-prov} (IRIS)}
	\label{fig:comp-souffle-iris}

	\begin{tikzpicture}[scale=0.7]
	  \begin{axis}[
	  			grid = major, 
	  			grid style = {dashed, gray!30}, 
	  			xlabel = {Output DB size},
	  			x label style={at={(current axis.right of origin)},anchor=north, below=5.5mm},
				ylabel = {time (s)},
				ylabel style={at={(-0.06,0.5)}},
				axis y line*=right,
				xmin = 0,   % start the diagram at this x-coordinate
				xmax = 30,  % end   the diagram at this x-coordinate
				%ymin = 0,   % start the diagram at this y-coordinate
				%ymax = 4, % end   the diagram at this y-coordinate
				ymode=log,
        			log basis y={10},
        			log origin=infty,
				/pgfplots/xtick = {0,5,...,30},
				%/pgfplots/ytick = {0,0.5,...,4},
				xticklabels={0~K, 5~K, 10~K, 15~K, 20~K, 25~K, 30~K},
				width=12cm, height=5cm,
				legend style={
					at={(0,1)},
					anchor=north west},
                                x label style={at={(axis description cs:0.5,0.05)},anchor=north},
	  			]
                \addplot[mark=*,red] table {souffle-prov-amie.dat};
                \addplot[mark=square*,blue] table {souffle-amie.dat};
   		\legend{{\algo{Soufflé-prov}}, {\algo{Soufflé}}}
	  \end{axis}
	\end{tikzpicture}
	\parbox[b]{2cm}{ \scriptsize 
					\begin{alignat*}{3}
					\textsf{hc(x,y)}	 	\; \;	&	\textsf{:-}	&& \; \textsf{hasChild(x, y).}	\\
					\textsf{imt(x,y)}	\; \;	&	\textsf{:-}	&& \; \textsf{isMarriedTo(x, y).}	\\
					& && \\
					\textsf{hc(x,y)}	 	\; \;	&	\textsf{:-}	&& \; \textsf{imt(z,x), imt(z, y).}	\\
					\textsf{hc(x,y)}	 	\; \;	&	\textsf{:-}	&& \; \textsf{imt(x,z), imt(z, y).}	\\
					\textsf{imt(x, y)}	\; \;	&	\textsf{:-}	&& \; \textsf{imt(y, x).}			\\
					\textsf{imt(x, y)}	\; \;	&	\textsf{:-}	&& \; \textsf{hc(x, z), hc(y, z).}
					\end{alignat*}
					}
	\caption{Comparison between \algo{Soufflé} and \textsc{Soufflé-prov} (AMIE)}
	\label{fig:comp-souffle-amie}
\end{figure}

\paragraph*{Experiments}

Our implementation was tested on an Intel Xeon E5-2650 computer with 176~GB of
RAM. % cluster of computers at Inria Paris,
%running on Intel Xeon computers and 176 GB of
%RAM\footnote{\url{https://paris-cluster-2019.gitlabpages.inria.fr/cleps/cleps-userguide/}}.
The source code
is freely available on \textsc{Github}\footnote{\url{https://github.com/yannramusat/souffle-prov}}.

%\paragraph*{Comparison with provenance algorithms designed for graph databases}

The initial motivation for this work stems from a key observation we outlined in the conclusion of~\cite{ramusat:hal-03140067}, where we pointed out the similarity between Datalog and the classes of semirings and their optimized provenance algorithms discussed in that work, focused on graph provenance algorithms. To translate this graph setting into Datalog, 
the graph structure has been encoded into an $\mathsf{EDB}$ with one binary predicate \emph{edge} encoding the edges, and with edge notations depending on the provenance semiring we chose. We run the \emph{transitive closure} Datalog program outlined in Algorithm~\ref{alg:tr-dl}. Full information on the graph datasets used can be found in~\cite{ramusat:hal-03140067}. 
We provide, in Figure~\ref{fig:comp}, a comparison between the best-first
method introduced here (\textsc{Soufflé-prov}), the plain
\textsc{Soufflé} without provenance computation, and a previous provenance-based algorithm
\cite{ramusat:hal-03140067} computing
all-pairs shortest-distances over graph databases (the
\textsc{NodeElimination} algorithm, with a choice of node to eliminate based
on its id or its degree), in the tropical semiring.
Similarly, in Figure~\ref{fig:comp-ss}, we compare with previous
solutions for single-source shortest-distances, in the same semiring, in
particular the adaptation of \textsc{Dijkstra} algorithm
of~\cite{ramusat:hal-03140067}, and, for comparison purposes, a
\textsc{Bfs} algorithm that does not compute provenance.
The main focus of this work was to provide an effective Datalog based solution for all-pairs provenance in graph databases. 
For the all-pairs problem, depending on the dataset, (see, e.g.,
\textsc{Yeast}), \textsc{Soufflé-Prov} is significantly faster than the previous best known algorithm, \textsc{NodeElimination}. 
Unsurprisingly, \textsc{BFS} and \textsc{Dijkstra} perform respectively better than \textsc{Soufflé} and \textsc{Soufflé-prov} in the single-source context. What favors both graph algorithms strongly is the fact that they reduce redundant computation: the algorithms abort whenever the target vertex has been reached. 
\textsc{Soufflé-prov} performs between $1$ and $2$ orders of magnitude
faster than \textsc{Mohri} \cite{Mohri:2002:SFA:639508.639512} -- an algorithm designed for single-source provenance on $k$-closed semirings. This fact highlights the potential of adapting the best-first method to also handle $k$-closed semirings.   

%\paragraph*{Running time overhead of the best-first method}

We now turn to evaluating the overhead induced by adding provenance
computations to Datalog via \textsc{Soufflé}. This appears fairly modest
in the experiments of Figures~\ref{fig:comp}, \ref{fig:comp-ss}. We further
consider two datasets that go beyond the setting of graph databases:
a non-recursive query over synthetic data (IRIS \cite{7113407}, obtained
from the authors of that paper) in
Figure~\ref{fig:comp-souffle-iris} and a program over a knowledge base
(AMIE
\cite{amie}, available
online\footnote{\url{https://bitbucket.org/amirgilad/selp/src/Journal/}}) populated with real-world data and automatic rules in Figure~\ref{fig:comp-souffle-amie}. The evaluation was made by varying the output database size.
In IRIS, the overhead induced is a modest constant factor of
approximately $4$ times the original cost. This is not the case for AMIE,
where the overhead tends to increase with the size of the answer. We
conjecture this is due to the lack of native support for tuple updates in \textsc{Soufflé} (in Datalog, the fixed-point operator is inflationary); this leads to inefficiency in updating provenance values associated to each tuple.

\section{Reverse translations and opportunities}
\label{sec:opportunities}

Our focus so far in this study was to express Datalog programs into hypergraphs. We now consider the opposite direction: given a weighted hypergraph $H$ with all its weight functions derived from a semiring $S$ via its $\otimes$ operator we aim to design an equivalent Datalog program which computes the provenance on the given semiring.
We propose two translations -- a straightforward one and one with a fixed
program, to benefit of the low data complexity of Datalog.

\begin{definition}[Weighted hypergraph to Datalog program]
	Given an hypergraph $H = \langle V,E \rangle$ of maximal hyperedge arity $n$, we define a Datalog query $q_H$ with extensional predicates $E_n$ of arity $n+1$ for $0 \leq i \leq n$ and an unary intensional predicate $R$. 
	For each $1 \leq i \leq n$, we add a single rule of the form: 
  \(
		R(x)		 \leftarrow 		E_{i+1}(x, x_1, \dots, x_n),
							R(x_1), \dots, R(x_n)
                                              \),
	 degenerating into $R(x) \leftarrow E_1(x)$ for $i = 0$.
	For each $e \in E$ of arity $n \geq 1$, we populate $E_{n+1}$
        with $E_{n+1}(h(e), {\mathrm{T}(e)})$ having provenance value $\bar{1}$. 
	Source vertices (i.e., heads of nullary edges) may have an initial constant value $s \in S$. For each $e \in E$ of arity $0$ we tag $E_1(h(e))$ with the provenance value $f_e$.
\end{definition}

\begin{toappendix}
Let us start with a lemma showing the one-to-one correspondence between derivations in the hypergraph and proofs in Datalog. 

\begin{lemmarep} \label{lemma:mult_rev}
	Given a vertex $v \in V$ of the hypergraph, $\mathcal{D}_{H}(v) \simeq \{\tau \mid \tau \mbox{ yields } v\}$.
\end{lemmarep}

\begin{proof}
  By definition of $q_H$ each rule corresponds to a unique hyperedge. Then, we can inductively construct for a given derivation $D$ its associated (unique) Datalog proof $\tau_D$:
	\begin{itemize}
		\item If $|D| = 1$, then $v$ is a source vertex (head of a nullary hyperedge). Thus, the corresponding proof is the single instantiation of the rule $R(v) \leftarrow E_1(v)$.
		\item If $|D| \geq 1$, then it exists $e \in \mathit{BS}(v)$ where $|e| \geq 0$ and $D_i$ is a derivation of $T_i(e)$ for $1 \leq i \leq |e|$, then $D =  \langle e, D_1 \cdots D_{|e|} \rangle$ is a derivation of $v$. 
		By definition, this edge corresponds to a single rule $R(v) \leftarrow E_{i+1}(v, T_1(e), \dots, T_{|e|}(e)), R(T_1(e)), \dots, R(T_{|e|}(e))$.
		By IHP, for $1 \leq i \leq |e|$, $\tau_{D_i}$ is the corresponding proof of the derivation $D_i$. Then by composition we obtain $\tau_D$ the proof for $D$.
	\end{itemize}
\end{proof}

We then show the weight of each derivation of a tuple is equal to the corresponding proof weight in Datalog.

\begin{lemmarep} \label{lemma:sum_rev}
$ w(D) = \bigotimes\limits_{t' \in \: \text{leaves}(\tau_D)} \mathsf{prov}_{E_1, \dots, E_{n+1}}^{q_H}(t') $
\end{lemmarep}

\begin{proof}
	By induction on the size of the derivation $D$: 
	\begin{itemize}
		\item If $|D| = 1$ then, there exists a nullary edge $e \in E$ with $h(e) = v$ and $w(D) = f_e = \mathsf{prov}_{E_1}^{q_H}(e) = \bigotimes\limits_{t' \in \: \mathit{leaves}(\tau_D)} \mathsf{prov}_{E_1, \dots, E_{n+1}}^{q_H}(t')$.
		\item If $|D| \geq 1$ then it exists $e \in E$ and $D$ is of the form $ \langle e, D_1 \cdots D_{|e|} \rangle$ with $D_i$ a derivation of $T_i(e)$ for $1 \leq i \leq |e|$. We have $w(D) = f_e(w(D_1), \dots ,w(D_{|e|}))$ and by definition of $f_e = 		\otimes$ and by IHP $w(D) = \bigotimes\limits_{t' \in \: \mathit{leaves}(\tau_D)} \mathsf{prov}_{E_1, \dots, E_{n+1}}^{q_H}(t')$.
	\end{itemize}
\end{proof}
\end{toappendix}

We then have:

\begin{theoremrep} \label{th:eq_rev}
	Let $v$ be a vertex of a weighted hypergraph $H$ with all weight functions derived from the $\otimes$ operation of a semiring $S$ and $q_H$ its translation into a Datalog query, then $\delta_{H}(v) = \mathsf{prov}_{R}^{q_H} (v)$.
\end{theoremrep}

\begin{proof}[Proof of Theorem~\ref{th:eq_rev}]
  \[ 	\begin{array}{cclcl}
	\delta_{H}(t)	&	= & \bigoplus\limits_{D \in \mathcal{D}_{H}(v)} w(D) &  & \mbox{and by Lemma~\ref{lemma:sum_rev},} \\
					&	= &\bigoplus\limits_{D \in \mathcal{D}_{H}(v)}  \left(\bigotimes\limits_{t' \in \: leaves(\tau_D)} \mathsf{prov}_{E_1, \dots, E_{n+1}}^{q_H}(t') \right)& & \mbox{and by Lemma~\ref{lemma:mult_rev},} \\ 
					&	= &\bigoplus\limits_{\tau \: \mathit{yields} \: t} \left( \bigotimes\limits_{t' \in \: leaves(\tau)} \mathsf{prov}_{E_1, \dots, E_{n+1}}^{q_H}(t') \right) & = & \mathsf{prov}_R^{q_H} (t)
	\end{array}
  \]
\end{proof}

Let us know consider the case of programs where we have a fixed schema and arity:

\begin{definition}[Weighted hypergraph to Datalog program with fixed schema]
	Given an hypergraph $H = \langle V,E \rangle$ let us consider the following Datalog program $q_{H_f}$ over unary predicates $R$ and $\mathit{Nullary}$, binary predicates $E$, $N$ and $\mathit{First}$ and ternary predicate $\mathit{Next}$:	
	\begin{align*}
			R(x)		&	\leftarrow		E(x, e), H(e) \\
			H(e)		&	\leftarrow		\mathit{First}(e, x), R(x), N(e,x) 
                        &N(e,x)	&	\leftarrow		\mathit{Next}(e, x, y), R(y), N(e, y) \\
					&	\leftarrow		\mathit{Nullary}(e) 
                        &		&	\leftarrow		\mathit{End}(e, x)
	\end{align*}
	
	For each $e \in E$ of arity $n \geq 1$, we populate $E$ with $E(h(e), e)$ with provenance value~$\bar{1}$. 
        Let $x_1, \dots, x_n$ be the elements of ${\mathrm{T}(e)})$, we populate $\textit{First}$ with $\textit{First}(e,x_1)$, $\mathit{Next}$ with $\mathit{Next}(e, x_i, y_{i+1})$ for $1 \leq i \leq n-1$ and $End$ with $\mathit{End}(e, x_n)$; all have provenance value~$\bar{1}$. 
	
	Source vertices (heads of nullary edges) may have an initial constant value $s \in S$, then for each $e \in E$ of arity $0$ we tag $E(h(e),e)$ with the provenance value $f_e$ and $\mathit{Nullary}(e)$ with provenance value $\bar{1}$.
	
\end{definition}

The reasoning follows a similar flow:

\begin{toappendix}
\begin{lemmarep} \label{lemma:mult_fixed_rev}
	Given a vertex $v \in V$ of the hypergraph, $\mathcal{D}_{H_f}(v) \simeq \{\tau \mid \tau \mbox{ yields } v\}$.
\end{lemmarep}

\begin{proof} 
	We inductively construct for a given derivation $D$ its associated (unique) Datalog proof $\tau_D$:
	\begin{itemize}
		\item If $|D| = 1$, then $v$ is a source vertex (head of a nullary hyperedge $e$). Thus, the corresponding proof is composed of $R(v) \leftarrow E(v, e), H(e)$ and $H(e) \leftarrow \mathit{Nullary}(e)$.
		\item If $|D| \geq 1$, then there exists $e \in \mathrm{BS}(v)$ where $|e| \geq 0$ and $D_i$ is a derivation of $T_i(e)$ for $1 \leq i \leq |e|$, then $D =  \langle e, D_1 \cdots D_{|e|} \rangle$ is a derivation of $v$. 
                  Let $v_1, \dots, v_n$ be the elements of
                  ${\mathrm{T}(e)})$, this hyperedge corresponds to the following proof tree: 
                \[
		\begin{array}{lll} 
			R(v)		&	\leftarrow	&	E(v, e), H(e) \\
			H(e)		&	\leftarrow	&	\mathit{First}(e, v_1), R(v_1), N(e, v_1) \\
			\multicolumn{3}{l}{For \: 1 \leq i \leq n-1: } \\
			N(e,v_i)	&	\leftarrow	&	\mathit{Next}(e, v_i, v_{i+1}), R(v_{i+1}), N(e, v_{i+1}) \\
			N(e,v_n)	&	\leftarrow	&	\mathit{End}(e, v_n)
		\end{array}
              \]
		By IHP, for $1 \leq i \leq |e|$, $\tau_{D_i}$ is the corresponding proof of the derivation $D_i$ (derivation of the fact $R(v_i)$). Then, by composition, we obtain $\tau_D$ the proof for $D$.
	\end{itemize}
\end{proof}

\begin{lemmarep} \label{lemma:sum_fixed_rev}
	$ w(D) = \bigotimes\limits_{t' \in \: leaves(\tau_D)} \mathsf{prov}_{E, \mathit{Nullary}, \mathit{First}, \mathit{Next}, \mathit{End}}^{q_{H_f}}(t') $
\end{lemmarep}

\begin{proof}
	By induction on the size of the derivation $D$: 
	\begin{itemize}
		\item If $|D| = 1$ then there exists a nullary edge $e \in E$ with $h(e) = v$ and its associated (unique) derivation is $R(v) \leftarrow E(v, e), H(e)$ and $H(e) \leftarrow \mathit{Nullary}(e)$. 
		Thus $w(D) = f_e = \mathsf{prov}_{E}^{q_{H_f}}(v,e) \: \otimes \: \mathsf{prov}_{\mathit{Nullary}}^{q_{H_f}}(e)$.
		\item If $|D| \geq 1$ then there exists $e \in E$ and $D$ is of the form $ \langle e, D_1 \cdots D_{|e|} \rangle$ with $D_i$ a derivation of $T_i(e)$ for $1 \leq i \leq |e|$.
		The derivation tree given in Lemma~\ref{lemma:mult_fixed_rev} corresponding to the hyperedge has provenance:
		\begin{equation*}
			\begin{split} 
	 			\mathsf{prov}_{E}^{q_{H_f}}(v,e) \: \otimes \: \mathsf{prov}_{\mathit{First}}^{q_{H_f}}(e) \: \otimes \: \mathsf{prov}_{R}^{q_{H_f}}(v_1) \: \otimes & \\
	 			\bigotimes\limits_{1 \leq i \leq n- 1} \left( \mathsf{prov}_{\mathit{Next}}^{q_{H_f}}(e,v_i, v_{i+1}) \otimes \mathsf{prov}_{R}^{q_{H_f}}(v_{i+1}) \right) \otimes \mathsf{prov}_{\mathit{End}}^{q_{H_f}}(e, v_n) 
	 		\end{split}
	 	\end{equation*}  %We have $w(D) = f_e(w(D_1), \dots ,w(D_{|e|}))$ and by definition of $f_e = \otimes$ and by IHP $w(D) = \prod\limits_{t' \in \: leaves(\tau_D)} \mathsf{prov}_{E_1, \dots, E_{n+1}}^{q_H}(t')$.
	 	By IHP, for $1 \leq i \leq |e|$, $w(D_i) =
                \mathsf{prov}_{R}^{q_{H_f}}(v_i)$, thus, we obtain \[w(D)
                = \bigotimes\limits_{t' \in \: leaves(\tau_D)}
                \mathsf{prov}_{E, \mathit{Nullary}, \mathit{First},
                \mathit{Next}, \mathit{End}}^{q_{H_f}}(t').\]
	\end{itemize}
\end{proof}
\end{toappendix}

\begin{theoremrep} \label{th:eq_fixed_rev}
	Let $v$ be a vertex of a weighted hypergraph $H$ with all weight functions derived from the $\otimes$ operation of a semiring $S$ and $q_{H_f}$ its translation into a Datalog query with fixed schema, 
	then $\delta_{H_f}(v) = \mathsf{prov}_{R}^{q_{H_f}} (v)$.
\end{theoremrep}

\begin{proof}[Proof of Theorem~\ref{th:eq_fixed_rev}]
	We note $\text{EDB} = \{E, \mathit{Nullary}, \mathit{First}, \mathit{Next}, \mathit{End}\}$.
        \[ 	\begin{array}{cclcl}
		\delta_{H_f}(t)	&	= & \bigoplus\limits_{D \in \mathcal{D}_{H_f}(v)} w(D) &  & \mbox{and by Lemma~\ref{lemma:sum_fixed_rev},} \\
						&	= &\bigoplus\limits_{D \in \mathcal{D}_{H_f}(v)}  \left(\bigotimes\limits_{t' \in \: \mathit{leaves}(\tau_D)} \mathsf{prov}_{\text{EDB}}^{q_{H_f}}(t') \right)& & \mbox{and by Lemma~\ref{lemma:mult_fixed_rev},} \\ 
						&	= &\bigoplus\limits_{\tau \: \mathit{yields} \: t} \left( \bigotimes\limits_{t' \in \: \mathit{leaves}(\tau)} \mathsf{prov}_{\text{EDB}}^{q_{H_f}}(t') \right) & = & \mathsf{prov}_R^{q_{H_f}} (t)
		\end{array}
        \]
\end{proof}
%\textcolor{red}{ TODO: réparer la construction et prouver ensuite un théorème similaire avec eventuellement des conditions supplémentaires sur l'hypergraphe ou les semi-anneaux de provenance.}

\paragraph*{Case study: AND/OR graphs} 
Section 4.3 of~\cite{huang_advanced_2008} showcases some formalisms in
which equivalent hypergraphs can be constructed in the dynamic
programming framework proposed in their work. One advantage is that we
can directly benefit from these translations in our Datalog provenance
framework. However, there are some issues with these translations, as we
now discuss.

One example formalism that can be easily translated to our setting are the AND/OR graphs, a special case of graphs -- e.g., used in scheduling tasks -- in which nodes can express two types of restrictions: AND restrictions (in which tasks have to executed sequentially) and OR restrictions (in which tasks can be overlapping). 
These two operations, not unlike the $\otimes$ and $\oplus$ operations on semirings, suggest our translation strategy is readily usable.
Indeed, by our translation, vertices of the hypergraph would represent OR vertices and the hyperedges correspond to the AND vertices. 
This translation does not add any other restriction to the shape of the AND/OR graph. 
AND and OR vertices can have multiples outgoing edges and the AND/OR graph is not necessarily bipartite. 
However, not all applications in which AND/OR graphs are useful can be used with semiring operations. For instance, scheduling using AND/OR graphs~\cite{fast-pathfinding-and-or, scheduling-and-or-prec-constraints} requires three operations (\texttt{min}, \texttt{max} and $+$) instead of the two allowed using semirings.

Another crucial issue, not obvious at first glance, is that when expressing these graphs into our Datalog framework the semantics can change. 
As explained in~\cite{scheduling-and-or-prec-constraints}, zero-edge cycles express mutual events (occurring in groups, with no causal relations): either they all occur together or none of them do. 
In a deductive system, where semantics is given by a least fixed-point, they are believed to not occur at all. 
The semantics of Datalog provenance or derivation trees cannot express this situation: in Datalog there would be no finite proof of a set of given facts.  However, adding all of these facts to the solution would not break the deductive rules; simply they do not belong to the least fixed-point.

%\textcolor{red}{ TODO: donner un exemple simple et visuel d'un AND/OR graph avec zero-edge cycle, de son equivalent en Datalog et en grammaire de l'algo de Knuth (Dijkstra étendu) et montrer ce qu'on obtient dans les deux cas. }

%We will further provide an attempt to address this limitation. 
%We will work on discovering such mutual dependencies, try to rewrite the rules and propose a sound modification of the Datalog provenance.

\section{Related work}
\label{sec:related-work}

There are two major notions of provenance for information systems currently in use in the literature, each focusing on different usages. 
The first notion can be broadly categorized as ``informational'': the provenance encapsulates information about the deductive process leading to a specific result. 
Declarative debugging is one application of this concept. Analysis of provenance information in an interactive manner simplifies the user's burden of identifying which part of the rule specification is responsible for a \textit{faulty} derivation. 
\textsc{Soufflé} contains its own integrated debugging system, based on a provenance-based evaluation strategy~\cite{10.1145/3379446}.

In the current study we consider a ``computational'' notion of provenance, where operations (and queries) over provenance values are permitted. 
In this case, the underlying properties of the semiring are important for optimized algorithms.
For instance, \textit{absorptivity} is an ubiquitous property in the literature allowing efficient algorithms for provenance evaluation. This applies not only for Datalog programs, but also for e.g., regular path queries over (semiring)-annotated graph databases~\cite{ramusat:hal-03140067}.

With respect to Datalog provenance, it has been shown
in~\cite{deutch_circuits_2014} that, for a Datalog program having $n$
candidate $\mathsf{IDB}$ tuples, a circuit for representing Datalog
provenance in the semiring $\textbf{Sorp}(X)$ (the most general
absorptive semiring) only needs $n+1$ layers. For binary relations, e.g.,
representing the edge relation of a graph, this construction is at least
quadratic in the number of vertices, thus not practically applicable for
the graphs we analyzed in our experiments. Similarly,
in~\cite{esparza_solving_2011}, absorptive semirings (i.e., $0$-closed
semirings) have the property that derivation trees of size $\geq n$ are ``pumpable'' (they do not contribute to the final result). A concrete implementation~\cite{10.1007/978-3-319-08846-4_1} computes the provenance for commutative and idempotent semirings using $n$ Newton iterations.

Fairly recently, \cite{DBLP:journals/corr/abs-2105-14435} introduced POPS (\textit{Partially Ordered, Pre-Semiring}),  a structure decoupling the order on which the fixed-point is computed from the semiring structure. 
Complex and recursive computations over vectors, matrices, tensors are now expressible using this framework. The study also generalized the semi-naïve method from plain Datalog evaluation to idempotent semirings (aka dioids). 
%distributive dioids\footnote{Dioids are structures similar to semirings but without the  $\bar{0}$  annihilation property.}. 
In comparison, our method is restricted to semirings that are totally ordered (a subclass of distributive dioids\footnote{Distributive dioids are POPS structures over a distributive lattice being the natural order of the dioid.}), leveraging the invariant that once a fact is first labeled with a provenance value, we are certain it is the correct one.

In cases where keeping the \textit{full} provenance of a program (\emph{how}-provenance) is still prohibitively large, \cite{Deutch2018EfficientPT, 7113407} propose to select only a relevant subset of such trees using \textit{selection criteria} based on tree patterns and ranking over rules and facts occurring in the derivation. 
First, given a Datalog program $P$ and a pattern $q$, an \textit{offline} instrumentation is performed, leading to an \textit{instrumented program} $P_q$.
Then, given any database $D$, an efficient algorithm can be used to retrieve only the top-$k$ best derivation trees for $P_q(D)$. 
The top-$1$ algorithm of the study is closely related to our solution, but does not mention the use of a priority queue nor does it take into account the optimization provided by the semi-naïve evaluation strategy we describe in Section~\ref{sec:algorithm}. 

Our solution can be seen as a hybrid of the ideas introduced
in~\cite{DBLP:journals/corr/abs-2105-14435}
and~\cite{Deutch2018EfficientPT}. We generalize the semi-naïve evaluation
to a specific class of semirings in order to achieve a more
efficient algorithm, one that can be used in practical real-world scenarios.

\section{Conclusion}
\label{sec:conclusion}

In this work, we developed a novel method for Datalog provenance
computation based on the link between dynamic programming over
hypergraphs and the proof structure of provenance of Datalog programs. We
introduced Knuth's algorithm for computing the provenance, and optimized
it for practical use. We showed its feasibility by providing an
implementation on top of \textsc{Soufflé} and tested it on several graph databases and Datalog programs.

As a continuation of this work, we aim to establish a taxonomy of recent algorithms for provenance computation of Datalog programs (e.g.~\cite{DBLP:journals/corr/abs-2105-14435}), i.e., classify them by their data complexity and the class of semiring usable in each algorithm -- with a focus on $k$-closed semirings.
The objective is to provide a meaningful comparison with the already
established taxonomy for semiring provenance in graph
databases~\cite{ramusat:hal-03140067}.  Another direction of interest
would be to investigate a ``fix'' to the problems raised by Datalog provenance in AND/OR graphs (see Section~\ref{sec:opportunities}). Finally, in a more practical direction,  it might be feasible to implement
the semi-naïve evaluation for Datalog$^{\text{o}}$ over distributive dioids (Algorithm 1 in~\cite{DBLP:journals/corr/abs-2105-14435}) using \textsc{Soufflé}. Most of the inner workings can easily be adapted to do so, but extending the current data structures storing facts (B-Trie, B-Tree) to allow updates may be a challenge.

\clearpage
\bibliography{main}

\end{document}